\begin{document}

\title{Improving the complexity of Parys' recursive algorithm\thanks{Supported by EPSRC project Solving Parity Games in Theory and Practice.}}
%
%
\author{K. Lehtinen \and
S. Schewe \and
D. Wojtczak}
%
%
\institute{University of Liverpool, Liverpool, UK\\
\email{[k.lehtinen,sven.schewe,d.wojtczak]@liverpool.ac.uk}\\
}
\maketitle              
%
\newcommand{\solve}{{\textsc{Solve}}}
\newcommand{\atr}{\textsc{Attr}}

\begin{abstract}
Parys has recently proposed a quasi-polynomial version of Zielonka's recursive algorithm for solving parity games.
In this brief note we suggest a variation of his algorithm that improves the complexity to meet the state-of-the-art complexity of broadly $2^{O((\log n)(\log c))}$, while providing polynomial bounds when the number of colours is logarithmic.

\end{abstract}
\section{Introduction}

 In 2017 Calude et al. published the first quasi-polynomial algorithm for solving parity games~\cite{CJKLS17}.
 Since then, several alternative algorithms have appeared~\cite{Leh18,JL17},
 the most recent of which is Parys's quasi-polynomial version of the Zielonka's recursive algorithm~\cite{Par19}.
 
Parys's algorithm, although enjoying much of the conceptual simplicity of Zielonka's algorithm~\cite{Zie98}, has a complexity that is a quasi-polynomial factor larger than~\cite{CJKLS17},~\cite{JL17}, and~\cite{FJSSW17}.
More precisely, their complexity is, modulo a small polynomial factor, ${{c'+l}\choose{l}}$, with $c'$ being $c$ or $c/2$ and $l \in O(\log n)$, for games with $n$ positions and $c$ colours.
This also provides fixed-parameter tractability and  a polynomial bound for the common case where the number of colours is logarithmic in the number of states.
We propose a simplification that brings the complexity of Pary's algorithm down to match this. 
Note, however, that in a fine grained comparison the recursive algorithm still operates symmetrically, going through every colour, rather than just half of them, and $O(\log n)$ hides a factor of $2$.
Thus, a very careful analysis still reveals a small gap.

We also briefly comment on the relationship between this recursive algorithm and universal trees.

\section{Preliminaries}

A parity game $G=(V,V_E,E,\Omega:V\rightarrow [0..c])$ is a two-player game between players Even and Odd, on a finite graph $(V,E)$,
of which positions are partitioned between those belonging to Even, $V_E$ and those belonging of Odd $V_O=V\setminus V_E$, and
labelled by $\pi$ with integer colour from a finite co-domain $[0..c]$ by $\pi$. We assume that every position has a successor.

A play $\pi$ is an infinite path through the game graph.
It is winning for Even if the highest colour occurring infinitely often on it is even; else it is winning for Odd.
We write $\pi[i]$ for the $i^{th}$ position in $\pi$ and $\pi[0,j]$ for its prefix of length $j+1$.

A strategy for a player maps every prefix of a play ending in a position that belongs to this player to one of its successors.
A play $\pi$ agrees with a strategy $\sigma$ for Even (Odd) if whenever $\pi[i]\in V_E$ ($V_O$), then $\sigma(\pi[0,i])=\pi[i+1]$.
A strategy for a player is winning from a position $v$ if all plays beginning at $v$ that it agrees with are winning for that player.
Parity games are determined: from every position, one of the two players has a winning strategy~\cite{Mar75}.

Even's (Odd's) winning region in a parity game is the set of nodes from which Even (Odd) has a winning strategy.
We are interested in the problem of computing, given a parity game $G$, the winning regions of each player.

Given a set $S\subseteq V$, the E-attractor of $S$ in $G$, written $\atr_E(S,G)$,
is the set of nodes from which Even has a strategy which only agrees with plays that reach $S$.
O-attractors, written $\atr_O(S,G)$ are defined similarly for Odd.

An even dominion is a set of nodes $P\subseteq V$ such that nodes in $P\cap V_E$ have at least one successor in $P$ and nodes in $P\cap V_O$ have all of their successors in $P$, and Even has a winning strategy within the game induces by $P$. 
An odd dominion is defined similarly.

We will use the following simple lemmas to prove the correctness of our algorithm.

\begin{lemma}\label{lem:att}
If a dominion $D$ for player $P$ in a game $G$ does not intersect with $X$, then it does not intersect with $\atr_{\bar P}(X,G)$ either, where $\bar P$ is the opponent of $P$.
\end{lemma}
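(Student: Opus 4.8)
The plan is to argue by contradiction, exploiting the layered (rank) structure of the attractor together with the closure properties of a dominion. Recall that $\atr_{\bar P}(X,G)$ can be written as $\bigcup_{i\ge 0} A_i$, where $A_0 = X$ and $A_{i+1}$ consists of $A_i$ together with every node of $\bar P$ having at least one successor in $A_i$ and every node of $P$ all of whose successors lie in $A_i$. I would show, by induction on $i$, that $D \cap A_i = \emptyset$; the lemma then follows immediately.

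The base case $D \cap A_0 = D \cap X = \emptyset$ is exactly the hypothesis. For the inductive step, assume $D \cap A_i = \emptyset$ and suppose towards a contradiction that some $v \in D$ lies in $A_{i+1}\setminus A_i$. If $v$ belongs to $\bar P$, then $v$ was added to $A_{i+1}$ because it has a successor $w \in A_i$; but $v \in D \cap V_{\bar P}$ forces \emph{all} successors of $v$, in particular $w$, to lie in $D$, so $w \in D \cap A_i$, a contradiction. If instead $v$ belongs to $P$, then \emph{all} successors of $v$ lie in $A_i$; but $v \in D \cap V_P$ guarantees that $v$ has at least one successor $w \in D$, and again $w \in D \cap A_i$, a contradiction. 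Hence $D$ misses every $A_i$, so $D \cap \atr_{\bar P}(X,G) = \emptyset$.

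Alternatively, and perhaps more in the spirit of the rest of the paper, one can give a strategic proof: a strategy of $P$ witnessing that $D$ is a dominion keeps every play starting in $D$ inside $D$ forever (nodes of $P$ in $D$ have a chosen successor in $D$, nodes of $\bar P$ in $D$ have all successors in $D$), while from any node of $\atr_{\bar P}(X,G)$ player $\bar P$ has a strategy forcing the play to visit $X$; confronting these two strategies from a hypothetical $v \in D \cap \atr_{\bar P}(X,G)$ produces a play that both stays in $D$ and reaches $X$, contradicting $D \cap X = \emptyset$. I would present the inductive version, since it is self-contained and does not rely on the (standard but here unstated) fact that an attractor strategy forces a visit to $X$ within boundedly many steps. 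The only point requiring care is applying the correct one of the two dominion closure conditions in each of the two ownership cases — which the argument above spells out explicitly — so I do not anticipate a genuine obstacle.
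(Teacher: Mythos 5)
Your proposal is correct. Your primary argument --- induction on the attractor rank decomposition $\atr_{\bar P}(X,G)=\bigcup_i A_i$, using the dominion closure conditions to show no $v\in D$ can enter any $A_{i+1}\setminus A_i$ --- handles both ownership cases properly and establishes the claim. The paper's proof is instead the one-line strategic argument you mention as an alternative: player $P$'s dominion strategy keeps plays inside $D$, which is incompatible with $\bar P$ forcing a visit to $X$ from any node of the attractor. The two approaches buy slightly different things. The paper's version is shorter and stays at the level of strategies, matching the style of Lemmas~\ref{lem1} and~\ref{lem:no-h}, but it implicitly relies on the standard fact that from the attractor $\bar P$ can force reaching $X$ against \emph{every} $P$-strategy, a fact the paper never states or proves. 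Your inductive version is fully self-contained given the fixpoint characterisation of the attractor, and it makes explicit exactly which closure condition of the dominion is used in each case; the only cost is that it presupposes the $A_i$ definition of the attractor, whereas the paper defines attractors purely in terms of strategies --- so strictly speaking you would want a sentence identifying the two definitions, or to take the rank decomposition as the definition. Either way, the argument is sound.
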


\begin{proof}
From the definition of a dominion, player $P$ has a strategy that from within $D$ only agrees with plays staying within $D$, contradicting any node in $D$ being within the attractor of $X$.
\end{proof}

\begin{lemma}\label{lem1}
Let $D$ be a dominion for player $P$ in a game $G$. Then for all sets $X$, $D\setminus \atr_{P}(X,G)$ is a  dominion for $P$ in $G\setminus  \atr_{P}(X,G)$.
\end{lemma}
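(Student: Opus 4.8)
The plan is to verify, one by one, the three defining conditions of a $P$-dominion for the set $D' := D \setminus \atr_P(X,G)$ inside the game $G' := G \setminus \atr_P(X,G)$. The only tool needed is the fixed-point characterisation of the attractor: a node $v \in V_P$ lies in $\atr_P(X,G)$ iff $v \in X$ or some successor of $v$ lies in $\atr_P(X,G)$, while a node $v \in V_{\bar P}$ lies in $\atr_P(X,G)$ iff $v \in X$ or \emph{all} successors of $v$ lie in $\atr_P(X,G)$. Since $X \subseteq \atr_P(X,G)$, this says that a $P$-node $v \notin \atr_P(X,G)$ has \emph{every} successor outside $\atr_P(X,G)$, whereas an $\bar P$-node $v \notin \atr_P(X,G)$ has \emph{at least one} successor outside $\atr_P(X,G)$. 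In particular $G'$ is again a parity game, since every surviving node retains a successor, and removing the attractor deletes only vertices, not edges among the survivors.

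I would then check closure. For $v \in D' \cap V_P$: the dominion property of $D$ gives a successor $w \in D$, and since $v \notin \atr_P(X,G)$ every successor of $v$ — in particular $w$ — lies outside $\atr_P(X,G)$, so $w \in D'$ and the edge survives in $G'$. For $v \in D' \cap V_{\bar P}$: every successor of $v$ in $G$ lies in $D$, hence every successor that is not in the attractor lies in $D'$, and at least one such successor exists because $v \notin \atr_P(X,G)$. Thus $D'$ satisfies the closure conditions in $G'$, the subgame of $G'$ induced by $D'$ is well-defined, and its edges are exactly the edges of $G$ inside $D'$, all of which already lie inside $D$.

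Finally, for the winning-strategy condition I would show that the winning strategy $\sigma$ of $P$ on the subgame induced by $D$ works verbatim: whenever $\sigma$ is consulted at a $P$-node of $D'$ it picks a successor in $D$, which by the argument above is automatically in $D'$, so every play of the subgame induced by $D'$ that agrees with $\sigma$ stays in $D'$, hence in $D$, and is a legal play of the subgame induced by $D$ agreeing with $\sigma$; being winning for $P$ there, it is won by $P$. This gives all three conditions. The proof is essentially bookkeeping; the one point to get right, and the place where a careless argument would break, is the asymmetry between $P$-nodes and $\bar P$-nodes relative to $\atr_P(X,G)$ (a $P$-node outside the attractor has \emph{all} successors outside, an $\bar P$-node only \emph{one}), which is exactly what makes both the closure conditions and the claim that $\sigma$ cannot leave $D'$ go through. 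I do not expect a genuine obstacle beyond stating these attractor facts cleanly.
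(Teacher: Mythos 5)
Your proof is correct and follows the same route as the paper, whose entire argument is the one-line observation that the same winning strategy for $P$ on $D$ still witnesses $D\setminus \atr_P(X,G)$ being a dominion in the subgame; you simply fill in the details, correctly isolating the key attractor fact (a $P$-node outside $\atr_P(X,G)$ has all successors outside it, an $\bar P$-node at least one) that makes the closure conditions and the strategy's confinement to $D'$ go through.
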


\begin{proof}
The same strategy that witnesses $D$ being a dominion for $P$ in $G$ witnesses $D\setminus \atr_{P}(X,G)$ being a  dominion for $P$ in $G\setminus  \atr_{P}(X,G)$.
\end{proof}

\begin{lemma}\label{lem:no-h}
If the highest priority $h$ in a dominion $D$ for a player $P$ in a play $G$ is not of $P$'s parity, then $D$ contains a non-empty sub-dominion without $h$.
\end{lemma}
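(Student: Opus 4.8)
The plan is to read off the required sub-dominion from a positional winning strategy. Write $H\subseteq D$ for the set of nodes of priority $h$, and let $\sigma$ be a winning strategy for $P$ within the subgame induced by $D$ --- winning from every node of $D$, since $D$ is a $P$-dominion --- which, by memoryless determinacy of parity games~\cite{Zie98}, we may take to be positional. Let $G_\sigma$ be the graph obtained from that subgame by deleting, at each node controlled by $P$, every outgoing edge except the one chosen by $\sigma$. Because $\sigma$ is positional, every infinite path of $G_\sigma$ is a play consistent with $\sigma$ that stays within $D$, and hence is won by $P$.

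The first step is to note that no cycle of $G_\sigma$ passes through $H$: travelling around such a cycle forever produces an infinite $G_\sigma$-path whose highest priority is $h$ --- as $h$ is the highest priority occurring anywhere in $D$ --- which is not of $P$'s parity, contradicting that $P$ wins all infinite $G_\sigma$-paths. In particular, every infinite $G_\sigma$-path meets $H$ only finitely often, as revisiting a node of $H$ would close such a forbidden cycle.

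The second step is to put $Z:=\{v\in D : H\text{ is reachable from }v\text{ in }G_\sigma\}$ and $D':=D\setminus Z$. Every node of $D$ has an outgoing edge in $G_\sigma$, so if $Z$ were all of $D$ one could prolong a path forever --- reach $H$, take one further step (again into $D=Z$), reach $H$ again, and so on --- yielding an infinite $G_\sigma$-path that meets $H$ infinitely often, which is impossible; hence $D'\neq\emptyset$. Since $H\subseteq Z$, the set $D'$ contains no node of priority $h$. Moreover $Z$ is closed under taking predecessors in $G_\sigma$, so $D'$ is closed under taking successors in $G_\sigma$; consequently every node of $D'$ controlled by $P$ retains its $\sigma$-successor inside $D'$, while every node $v$ of $D'$ controlled by $\bar P$ has all of its $G$-successors in $D$ --- because $D$ is a dominion --- and these are exactly its $G_\sigma$-successors, which therefore all lie in $D'$. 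Hence $D'$ induces a genuine subgame, in which the restriction of $\sigma$ is a legal strategy for $P$ whose every play is an infinite $G_\sigma$-path, so $P$ wins everywhere on $D'$. Thus $D'$ is a non-empty sub-dominion of $D$ for $P$ that avoids priority $h$, as claimed.

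I expect the main difficulty to lie in the closure argument of the last step: it is why one works with the predecessor-closed set $Z$ rather than with the $\bar P$-attractor of $H$ (whose complement is not obviously won everywhere by $P$) or with the set of nodes that merely admit some $H$-avoiding path (which need not be closed under the moves of $\bar P$). Given the closure, both the non-emptiness of $D'$ and the parity bookkeeping fall out of the absence of $H$-cycles in $G_\sigma$.
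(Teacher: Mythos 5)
Your proof is correct, and it takes a genuinely different (and more careful) route than the paper's one-line argument. The paper argues that not every position of $D$ can lie in $\atr_{\bar P}(N_h,D)$ --- otherwise the opponent could force $h$ infinitely often --- and implicitly offers the complement of that attractor as the desired sub-dominion. As your closing remark anticipates, that complement, while non-empty, is not obviously (and in fact not always) won everywhere by $P$: the witnessing strategy for $D$ may be forced to pass \emph{through} the attractor. A three-node example with $P=\mathrm{Even}$, positions $a,b,c\in V_E$ of priorities $1,2,3$ and edges $a\to a$, $a\to c$, $c\to b$, $b\to b$ is an Even dominion whose attractor complement $\{a,b\}$ leaves $a$ with only the self-loop of priority $1$, so it is not a dominion; one must shrink further. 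Your construction does exactly that: passing to a positional winning strategy $\sigma$ and deleting the predecessor-closed set $Z$ of nodes that can still reach priority $h$ in $G_\sigma$ (note $Z\supseteq\atr_{\bar P}(N_h,D)$, so your $D'$ sits inside the set the paper gestures at) makes $D'$ successor-closed in $G_\sigma$, whence the restriction of $\sigma$ is a legal strategy all of whose plays are infinite $G_\sigma$-paths and therefore winning; the absence of closed walks through $H$ in $G_\sigma$ then delivers both non-emptiness and the exclusion of $h$. The price is an appeal to memoryless determinacy (which the paper does not otherwise invoke); the gain is that the dominion property of $D'$ is actually established rather than asserted. All individual steps check out, including the closure argument for $\bar P$-nodes (all their successors lie in $D$ because $D$ is a dominion, hence survive in $G_\sigma$, hence lie in $D'$), so I have no corrections.
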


\begin{proof}
Otherwise, every position in $D$ would be in the attractor of the nodes of priority $h$, and the opponent would have a strategy to see $h$ infinitely often.
\end{proof}

\section{The Algorithm}

\begin{algorithm}
\caption{$\solve_E(G,h,p_E,p_O)$}
\begin{algorithmic}[1]\label{alg:parys}
\IF{$G = \emptyset \vee p_E \leq 1$}
\RETURN $\emptyset;$
\ENDIF
\WHILE{$W_O\neq 0$}
\STATE $N_h := \{v\in G | \pi(v)=h\};$
\STATE $H := G \setminus \atr_E(N_h,G)$
\STATE $W_O := \solve_O(H,h-1,\lfloor p_O/2\rfloor,p_E);$
\STATE $G := G \setminus \atr_O(W_O,G);$
\ENDWHILE
\STATE $N_h := \{v\in G | \pi(v)=h\};$
\STATE $H := G \setminus \atr_E(N_h,G,)$
\STATE $W_O := \solve_O(H,h-1,p_O,p_E);$
\STATE $G := G \setminus \atr_O(W_O,G,);$

\WHILE{$W_O\neq 0$}
\STATE $N_h := \{v\in G | \pi(v)=h\};$
\STATE $H := G \setminus \atr_E(N_h,G)$
\STATE $W_O := \solve_O(H,h-1,\lfloor p_O/2\rfloor,p_E);$
\STATE $G := G \setminus \atr_O(W_O,G);$
\ENDWHILE
\RETURN $G$
\end{algorithmic}
\end{algorithm}
We first recall Parys' quasi-polynomial version of Zielonka's algorithm in Algorithm~\ref{alg:parys}.
In brief, the difference between this algorithm and Zielonka's is that this procedure takes a pair of parameters that bound the size of the dominions, for Eve and Odd respectively, that the procedure looks for;
it first removes one player's dominions (and their attractors) of size up to half the parameter until this does not yield anything anymore,
then searches for a single dominion of the size up to the input parameter, then again carries on with searching for small dominions. 
In each of the recursive calls, the algorithm solves a parity game with one colour less, and either half the input parameter (most of the time)
or the full input parameter (once).
The correctness hinges on the observation that only one dominion can be larger than half the size of the game,
so the costliest call with the full size of the game as parameter needs to be called just once.

Our simplification, in Algorithm~\ref{alg:simple}, replaces each of the two while-loops with a single recursive call that also halves a precision parameter, but, unlike Parys's algorithm, operates on the whole input game arena at once (or what is left of it, in the case of the last call), rather than on a series of subgames of lower maximal colour. In brief, our algorithm computes three regions $W_1$, $W_2$ and $W_3$ one after the other which together contain all small Odd dominions and no small Even dominion, in order to return the complement of their union. $W_1$ and $W_3$ are based on calls that only identify very small dominions; the correctness of the algorithm hinges on proving that $W_1$ and $W_2$ together already account for over half of any small Odd dominion, and hence the last call will correctly handle what is left.

For both algorithms, the dual, $\solve_O$ is defined by replacing $E$ with $O$ and vice-versa, although in our case $\solve_O$ returns $\emptyset$, rather than $G$, if $h=0$.

\begin{algorithm}\label{alg:simple}
\caption{$\solve_E(G,h,p_E,p_O)$}
\begin{algorithmic}[1]
\IF{$G = \emptyset$}
\RETURN $\emptyset;$
\ENDIF
\IF{$p_O = 0$ or $h=0$}
\RETURN $G$
\ENDIF
\STATE $W=G\setminus \solve_E(G,h, p_E, \lfloor p_O/2\rfloor);$\label{W}
\STATE $W_1=\atr_O(W_O',G);$\label{W1}
\STATE $G_1 := G \setminus W_1;$\label{G1}
\STATE $N_h := \{v\in G_1 | \pi(v)=h\};$
\STATE $G_2 := G_1 \setminus \atr_E(N_h,G_1)$\label{G2}
\STATE $W' := \solve_O(G_2,h-1,p_O,p_E);$\label{W'}
\STATE $W_2:=\atr_O(W',G_1)$
\STATE $G_3 := G_1 \setminus W_2;$\label{G3}
\STATE $W_3:=G_3 \setminus \solve_E(G_3,h,p_E,\lfloor p_O/2\rfloor);$\label{W''}
\STATE $G := G \setminus (W_1+W_2+W_3);$
\RETURN $G$
\end{algorithmic}
\end{algorithm}

\section{Correctness}

We prove the following lemma, which guarantees that $\solve_E(G,h,p_E,p_O)$ and $\solve_O(G,h,p_E,p_O)$ partition a parity game $G$ of maximal priority at most $h$ into a region that contains all odd dominions up to size up to $p_O$ and a region that contains all even dominions of size up to $p_E$. Then $\solve_E(G,h,|G|,|G|)$ solves $G$ of maximal priority $h$.

\begin{lemma}
 $\solve_E(G,h,p_E,p_O)$, where $h$ is even and no smaller than the maximal priority in $G$, returns a set that:
 \begin{itemize}
 \item[i)] contains all even dominions up to size $p_E$, and
 \item[ii)]\label{item:no-odd} does not intersect with an odd dominion with size up to $p_O$.
 \end{itemize}
 Similarly,  $\solve_O(G,h,p_O,p_E)$, where $h$ is odd and no smaller than the maximal priority in $G$, returns a set that:
 \begin{itemize}
 \item[i)] contains all odd dominions up to size $p_O$, and
 \item[ii)]\label{item:no-even} does not intersect with an even dominion with size up to $p_E$.
 \end{itemize}
\end{lemma}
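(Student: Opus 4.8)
The plan is to prove the two statements simultaneously, by induction on the pair $(h,m)$ ordered lexicographically, where $m$ is the last argument of the call --- that is, $m=p_O$ for a call to $\solve_E$ and $m=p_E$ for a call to $\solve_O$. This order decreases along the recursion: the two ``self'' calls keep $h$ fixed and replace $m$ by $\lfloor m/2\rfloor<m$ (they are only reached when $m\ge 1$), while the single ``cross'' call decreases $h$ by one; and every call manifestly preserves the parity of $h$ and the hypothesis that $h$ is at least the maximal priority of its game argument --- note in particular that $G_2$ contains no vertex of priority $h$, so its maximal priority is at most $h-1$. The base cases ($G=\emptyset$; $m=0$; $h=0$) are immediate from the definitions, since then either $h$ forces one player to win everything or the only dominion of the relevant size is empty. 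By the stated symmetry it suffices to carry out the inductive step for $\solve_E$, on a nonempty $G$ with $h$ even, $h\ge 2$ and $p_O\ge 1$.

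For the inductive step I would first read off the induction hypothesis for the three recursive calls $S_1:=\solve_E(G,h,p_E,\lfloor p_O/2\rfloor)$, $W':=\solve_O(G_2,h-1,p_O,p_E)$ and $S_3:=\solve_E(G_3,h,p_E,\lfloor p_O/2\rfloor)$: $S_1$ and $S_3$ contain every even dominion (of $G$, resp.\ $G_3$) of size $\le p_E$ and are disjoint from every odd dominion of size $\le\lfloor p_O/2\rfloor$, so $W=G\setminus S_1$ and $W_3=G_3\setminus S_3$ contain all ``small'' odd dominions and miss all ``small'' even ones; and $W'$ contains every odd dominion of $G_2$ of size $\le p_O$ and misses every even dominion of $G_2$ of size $\le p_E$. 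The remainder is bookkeeping with the removals $G\to G_1\to G_2,G_3$, using Lemma~\ref{lem:att}, Lemma~\ref{lem1}, and the routine facts that a player's dominion disjoint from a set $Y$ remains a dominion of the (valid) subgame $G\setminus Y$, that it stays disjoint from the opponent's attractor of $X$ once it misses $X$ (Lemma~\ref{lem:att}), and that removing that player's own attractor can only shrink it (Lemma~\ref{lem1}). Tracking an even dominion $D$ of $G$ of size $\le p_E$: it misses $W_1=\atr_O(W,G)$ (since it misses $W$), hence is a dominion of $G_1$; its image $D\setminus\atr_E(N_h,G_1)$ is an even dominion of $G_2$ of size $\le p_E$, hence disjoint from $W'$, so (as $W'\subseteq G_2$) all of $D$ misses $W'$ and therefore $D$ misses $W_2=\atr_O(W',G_1)$; then $D$ is an even dominion of $G_3$, sits in $S_3$, and misses $W_3$. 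So $D$ misses $W_1\cup W_2\cup W_3$, which is clause (i). Clause (ii) asks that $W_1\cup W_2\cup W_3$ contain every odd dominion $D$ of $G$ of size $\le p_O$; fixing such a $D$ (nonempty, else trivial), $D_1:=D\setminus W_1$ is an odd dominion of $G_1$ of size $\le p_O$ by Lemma~\ref{lem1}, and if $D_1$ avoids $\atr_E(N_h,G_1)$ then $D_1$ is an odd dominion of $G_2$ of size $\le p_O$, so $D_1\subseteq W'\subseteq W_2$ and $D\subseteq W_1\cup W_2$.

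The remaining case --- $D_1$ meets $\atr_E(N_h,G_1)$ --- is the heart of the matter and the step I expect to be the real obstacle. Here a play kept inside $D_1$ by Odd's dominion strategy while Even follows its attractor strategy reaches $N_h\cap D_1$, so $D_1$ has maximal priority exactly $h$, which is not Odd's parity; by Lemma~\ref{lem:no-h}, $D_1$ has a nonempty odd sub-dominion with no priority-$h$ vertex, which by Lemma~\ref{lem:att} lies in $G_2$ and hence in $W'\subseteq W_2$, and more generally the residue of $D_1$ outside $W_2$ is confined to the Even-attractor of the priority-$h$ vertices of $D_1$. What must still be shown is the quantitative claim of the introduction: $W_1$ and $W_2$ together contain more than half of $D$; equivalently $D_3:=D\setminus(W_1\cup W_2)$ has size at most $\lfloor p_O/2\rfloor$. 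Morally this holds because the first call has already removed every odd dominion of size $\le\lfloor p_O/2\rfloor$ together with its $O$-attractor --- so $D$ survives into $G_1$ only when $|D|>\lfloor p_O/2\rfloor$, and the part that survives is essentially the unique ``large'' one --- while the full-precision call on the $h$-free game $G_2$ disposes of every surviving odd dominion of lower maximal priority; turning this into the bound $|D_3|\le\lfloor p_O/2\rfloor$ is the size-bounded analogue of Parys' observation that at most one opponent dominion can exceed half of the current game, and carrying it out carefully is where the real work lies. Granting it, $D_3=D_1\setminus W_2$ is an odd dominion of $G_3$ of size $\le\lfloor p_O/2\rfloor$ (Lemma~\ref{lem1}), hence $D_3\subseteq W_3$ by the induction hypothesis, so $D=(D\cap W_1)\cup(D\cap W_2)\cup D_3\subseteq W_1\cup W_2\cup W_3$, which is clause (ii). Finally, $\solve_E(G,h,|G|,|G|)$ solves $G$: every dominion has size $\le|G|$, so the returned set contains Even's winning region and is disjoint from Odd's, and by determinacy it is exactly Even's winning region, its complement being Odd's.
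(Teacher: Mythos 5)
Your induction setup, the whole of clause (i), and the easy half of clause (ii) (the case where $D\setminus W_1$ survives into $G_2$) all match the paper's argument. But the step you yourself flag as ``where the real work lies'' --- establishing that $D\setminus(W_1\cup W_2)$ has size at most $\lfloor p_O/2\rfloor$ --- is the actual content of the lemma, and you leave it as a ``moral'' claim (``the part that survives is essentially the unique large one''). As written this is a genuine gap at the crux of the proof, not a routine verification.

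The paper closes it with a construction you do not have. Given an odd dominion $D$ of $G$ with $|D|\le p_O$, let $S$ be the union of \emph{all} odd sub-dominions of $D$ of size at most $\lfloor p_O/2\rfloor$, and let $A$ be the Odd-attractor of $S$ inside $D$. By the IH on the first call, $S\subseteq W$, hence $A\subseteq W_1$. If $A\ne D$, then $D\setminus A$ is a nonempty odd dominion of $G\setminus A$ (Lemma~\ref{lem1}) and so contains a nonempty odd sub-dominion $C$ avoiding priority $h$ (Lemma~\ref{lem:no-h}). Two observations then do the work: (a) $A\cup C$ is itself an odd dominion of $G$, and since the nonempty $C$ is disjoint from $S$, $A\cup C$ is \emph{not} one of the dominions collected into $S$, so by maximality of $S$ it must satisfy $|A\cup C|>\lfloor p_O/2\rfloor$; and (b) because $C$ contains no vertex of priority $h$, its trace in $G_1$ is an odd dominion not just of $G_1$ but of $G_2$, so the IH on the full-precision call gives $C\subseteq W_1\cup W_2$. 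Hence $W_1\cup W_2$ absorbs the dominion $A\cup C$ of size exceeding $\lfloor p_O/2\rfloor$, and $D\setminus(W_1\cup W_2)$ --- an odd dominion of $G_3$ by Lemma~\ref{lem1} --- has size at most $p_O-\lfloor p_O/2\rfloor-1\le\lfloor p_O/2\rfloor$, so the last call catches it. Note that the pivotal case split is not yours (whether $D\setminus W_1$ meets $\atr_E(N_h,G_1)$) but whether $A=D$; and it is the maximality of $S$ that converts ``this sub-dominion was not removed by the half-precision call'' into the quantitative bound you were missing.
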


\begin{proof}
We show this by induction over the sum $h+p_E+p_O$.

\begin{description}
\item[Base case $h+p_E+p_O=0$.] Then $p_E=p_O=0$ and any set will do.

\item[Induction step] We consider the case of $\solve_E$; the case of $\solve_O$ is similar. If $h=0$, then $G$ is a dominion for Even; we are done. We proceed with $h>0$. 

We first show i) that $\solve_E(G,h,p_E,p_O)$ returns all even dominions up to size $p_E$. Let $D$ be such a dominion.
According to the IH, $D$ does not intersect with  $W$ 
and therefore it does not intersect with $W_1$ 
either. It is therefore contained in $G_1$.
The intersection $D'$ of $D$ and $G_2$ is an even dominion in $G_2$ (Lemma \ref{lem1}) and therefore, from the IH, it does not intersect with $W'$ 
nor  with its Odd attractor $W_2$ in $G_1$ (Lemma \ref{lem:att}). Then, $D$ does not intersect with $W_2$ either.
$D'$ is also contained in $G_3$ 
and by IH does not intersect with $W_3$ 
and therefore neither does $D$. Since  $D$ does not intersect with $W_1,W_2$ nor $W_3$ it is contained in the returned $G$.

We proceed with showing ii) that $\solve_E(G,h,p_E,p_O)$ returns a set that does not intersect with odd dominions of size up to $p_O$. Let $D$ be such a dominion, let $S$ be the union of odd dominions up to size $\lfloor p_O/2 \rfloor$ contained in $D$ and let $A$ be its Odd-attractor in $D$.

$S$ is contained in $W$ 
by IH, and therefore $A$ is contained in $W_1$ 
and does not intersect with $G_1$. 
If $A=D$ then $D$ is contained in $W_1$ and we are done. 

 We consider the case of $A\neq D$. $D\setminus A$ is non-empty and a dominion in $G\setminus A$ (Lemma \ref{lem1}). It contains an odd dominion $C$ of $G\setminus A$ in which $h$ does not occur (Lemma \ref{lem:no-h}).  Observe that since $C$ is an odd dominion in $G\setminus A$ and $A$ is an odd dominion in $G$, $C\cup A$ is an odd dominion in $G$. Since it is not included in $S$, it is larger than $p_O/2$. We now show that $C\cup A$ is included in $W_1+W_2$.

Since $W_1$ is an odd attractor,
$G_1\cap C$ is an odd dominion in $G_1$, and since $C$ contains no $h$, also in $G_2$.
By IH, it is contained in $W_2$ 
and $C$ is therefore contained in $W_1\cup W_2$, 
as is $A\cup C$. 

Then, since $A\cup C$ is larger than $p_O/2$, $D\setminus (W_1\cup W_2)$ is not only a dominion of $G_3$ (Lemma \ref{lem1}), it is also 
of size up to $\lfloor p_O/2 \rfloor$ and by IH contained in $W_3$. 
Hence $D$ is does not intersect with the returned $G$.
\end{description}
\end{proof}

\section{Analysis}

Let $f(h,l)$ be the number of calls to $\solve_E$ and $\solve_O$ of $\solve_E(G,h,p_E,p_O)$ (or of $\solve_O(G,h,p_E,p_O)$, if it is greater) where $l=\lfloor \log (p_E) \rfloor + \lfloor \log (p_O) \rfloor$.

An induction on $l+h$ shows that $f(h,l)\leq 2^l {h+l\choose  l} $. If $h+l=0$ then $h=0$ so $\solve_E(G,h,p_E,p_O)$ and $\solve_O(G,h,p_E,p_O)$ return immediately. For $h+l\geq 1$, we have:

\begin{equation} \label{eq1}
\begin{split}
f(h,l) & \leq 2f(h,l-1) + f(h-1,l) \\
 & \leq 2^{l-1} {{h+l-1}\choose{l-1}} + 2^{l}{{h+l-1} \choose l} \\
  & \leq 2^{l} {h+l \choose {l}}
\end{split}
\end{equation}

Then, as $l=2\dot \lfloor \log n  \rfloor $, this bring the complexity of the simplified algorithm down by a quasi-polynomial factor from Parys' version.

\begin{remark}
A $(n,d)$-universal tree is a tree into which all trees of height $d$ with $n$ leaves can be embedded while preserving the ordering of children. These combinatorial objects have emerged as a unifying thread among quasi-polynomial solutions to parity games and have therefore been the object of a recent spree of attention~\cite{CDFJRP19,FG18,CF19}. In particular, the size of a universal trees is at least quasi-polynomial, making this a potentially promising direction for lower bounds.
We observe that the call tree where the node $\solve_E(G,h,p_E,p_O)$ has for children its calls to $\solve_E$ and $\solve_O$ with parameter $h-1$  takes the shape of a universal $(n,d)$-tree where $n$ is the size of the parity game and $d$ its maximal colour. The recursive approach therefore does not seem to be free from universal trees either.
\end{remark}

\section{Conclusion}

This improvement brings the complexity of solving parity games recursively down to almost match the complexity to the algorithms based on Calude et al.'s method~\cite{CJKLS17,FJSSW17} and Jurdzi\'nski and Lazi\'c's algorithm \cite{JL17}. In particular it is fixed-parameter tractable, and polynomial when the number of colours is logarithmic. 
However, since the recursion solves the game symmetrically---that is, it goes through every colour, rather than just every other colour---and since the size of only the guarantees for the even or odd dominions are halved,
in the ${{a}\choose{b}}$ notation both $a$ ($c$ vs.\ $c/2$) and $b$ ($2\log n$ vs.\ $\log n$) double compared to Jurdzi\'nski and Lazi\'c's algorithm \cite{JL17}.

Whether this simplification to the recursion scheme makes this algorithm usable in practice remains to be seen.

\section*{Acknowledgements}
We thank  Nathana\"el Fijalkow for enlightening discussions and Pawe{\l} Parys for his comments.

\bibliographystyle{alpha}
\bibliography{refs}

\end{document}